\newtheorem{theorem}{Theorem}[section]
\newtheorem{proposition}[theorem]{Proposition}
\newtheorem{remark}[theorem]{Remark}
\newcommand{\VV}{\mathcal{V}}
\newcommand{\p}{\mathbf{p}}
\newcommand{\real}{{\mathbb{R}}}
\newcommand{\subscr}[2]{{#1}_{\textup{#2}}}
\newcommand{\map}[3]{#1: #2 \rightarrow #3}
\newcommand{\abs}[1]{|#1|}
\newcommand\oprocendsymbol{\hbox{$\square$}}
\newcommand\oprocend{\relax\ifmmode\else\unskip\hfill\fi\oprocendsymbol}
\begin{document}
\begin{frontmatter}

\title{Optimal Configurations in Coverage Control with Polynomial Costs\thanksref{footnoteinfo}} 

\thanks[footnoteinfo]{This
    work was supported by the United Technologies Research Center and ONR N00014-16-1-2722.}

\author[First]{Shaunak D. Bopardikar} 
\author[First]{Dhagash Mehta} 
\author[Second]{Jonathan D. Hauenstein}

\address[First]{United Technologies Research Center, 411 Silver Lane, East Hartford, CT 06118 USA (e-mails: \{bopardsd, mehtad\}@utrc.utc.com).}
\address[Second]{Department of Applied and Computational Mathematics and~Statistics, University of Notre Dame, Notre Dame, IN 46556 (e-mail: hauenstein@nd.edu).}

\begin{abstract}                
 We revisit the static coverage control problem for placement of vehicles with simple motion on the real line, under the assumption that the cost is a polynomial function of the locations of the vehicles. The main contribution of this paper is to demonstrate the use of tools from numerical algebraic geometry, in particular, a numerical polynomial homotopy continuation method
  that guarantees to find all solutions of polynomial equations, in order to characterize the \emph{global minima} for the coverage control problem. The results are then compared against a classic distributed approach involving the use of Lloyd descent, which is known to converge only to a local minimum under certain technical conditions.
\end{abstract}

\begin{keyword}
Coverage control, locational optimization, polynomial homotopy, numerical algebraic geometry.
\end{keyword}

\end{frontmatter}

\section{Introduction}
Vehicle placement to provide optimal coverage has received lot of
attention, especially in the past two decades. The goal is to determine where to place vehicles in order to optimize a specified cost that is a function of the locations of the vehicles. This paper addresses the characterization of the global minima for a vehicle placement problem under the assumption that this cost function is polynomial in the locations of the vehicles. It is well known that polynomials can be used as building blocks to describe several realistic functions. Applications of this work are envisioned in border patrol
wherein unmanned vehicles are placed to optimally intercept moving 
targets that cross a region under surveillance
(cf.~\cite{ARG-ASH-JKH:04, RS-MK-KL-DC:08}).

Vehicle placement problems are analogous to geometric location
problems, wherein given a set of static points, the goal is to find
supply locations that minimize a cost function of the distance from
each point to its nearest supply location (cf.~\cite{EZ:85}). For a
single vehicle, the expected distance to a point that is randomly
generated via a probability density function, is given by the
continuous $1$--median function.  The $1$--median function is 
minimized by a point termed as the \emph{median}
(cf.~\cite{SPF-JSBM-KB:05}). For multiple distinct vehicle locations,
the expected distance between a randomly generated point and one of
the locations is known as the continuous multi-median function
(cf.~\cite{ZD:95}). For more than one location, the multi-median
function is non-convex and thus determining locations that minimize
the multi-median function is hard in the general
case.~\cite{JC-SM-TK-FB:02j} addressed a distributed version of a
partition and gradient based procedure, known as the Lloyd algorithm,
for deploying multiple robots in a region to optimize a multi-median
cost function.~\cite{MS-DR-JJS:08} provided an adaptive control law to
enable robots to approximate the density function from sensor
measurements.~\cite{SM-FB:04p} presented motion coordination
algorithms to steer a mobile sensor network to an optimal
placement.~\cite{AK-SM:10} presented a coverage algorithm for vehicles
in a river environment. Related forms of the cost function have also
appeared in disciplines such as vector quantization, signal processing
and numerical integration (cf.~\cite{RMG-DLN:98, QD-VF-MG:99}).

In this paper, we consider the static coverage control problem for placement of vehicles with simple motion on the real line. We assume that the cost is a polynomial function of the locations of the vehicles. This structure implies that the set of all candidate optima is finite.
The main contribution of this paper is to demonstrate the use of tools from numerical algebraic geometry, 
in particular, a numerical polynomial homotopy continuation method that guarantees to find all solutions of the polynomial equations (cf. \cite{SWbook} and \cite{BertiniBook}).
Such methods have been used in a variety of 
problems, e.g., computing all finite and
infinite equilibria for constructing
one-dimensional slow invariant manifolds 
of dynamical systems (cf.~ 
\cite{SIM})
and finding all equilibria
of the Kuramoto model (cf.~\cite{KuramotoModel}).
Upon computing the finite set of candidate optima, 
we can evaluate the cost function at these points 
to obtain the \emph{global minimum}
for the coverage control problem. 
The results are then compared numerically using two examples with a classic distributed approach involving the use of Lloyd descent, which is known to converge only to a local minimum under certain technical conditions. We observe that in one of the examples, both methods lead to the same global minimizer, while in the second example, the Lloyd descent converges to only a local minimum if initialized from particular configurations.

This paper is organized as follows. The problem is formulated in
Section~\ref{secn:prob}. The multiple vehicle scenario is addressed
in Section~\ref{secn:multi}. The polynomial homotopy method is reviewed and its application to the coverage problem is presented in Section~\ref{sec:homotopy}. Numerical simulation results are presented in Section~\ref{sec:simulations}. 

\section{Problem Statement}\label{secn:prob}

We consider vehicles modeled with single integrator dynamics having
unit speed. A static target is generated at a random position $x \in [A,B]$ on the
segment $G:=[A, B]$, via a specified probability density function \mbox{$\map{\phi}{[A,B]}{\real_{\geq
    0}}$}. We assume that the density $\phi$ is bounded, and therefore integrable over a compact domain. The goal is to determine vehicle placements that minimize the expected time for the nearest vehicle to reach a target. We consider the both the single and multiple vehicle 
cases.
\subsection{Single Vehicle Case}\label{secn:probsingle}
We determine a vehicle location $p\in [A,B]$ that
minimizes $\map{\subscr{C}{exp}}{[A,B]}{\real}$ given by
\begin{equation}\label{eq:cost}
\subscr{C}{exp}(p):= \int_A^B C(p,x)\phi(x)dx, 
\end{equation}
where $\map{C}{[A,B]\times[A,B]}{\real_{\geq 0}}$
is an appropriately defined cost of the vehicle position $p$ and the target location $x$. In what
follows, we consider costs with the following properties.

(i) \emph{Polynomial dependence on $p$:} We assume that for any $p \in [A,B]$, the cost function $C$ is polynomial in $p$.


(ii) \emph{Homogeneity:} We assume that the function $C$ satisfies
\[
C(p, x) = \frac{1}{2}f((p-x)^2),
\]
where $f(\cdot) \geq 0$ is a polynomial and is monotonic with respect to its argument.

\subsection{Multiple Vehicles Case} \label{secn:probmulti}
Given $m\geq 2$
vehicles, the goal is to determine a set of vehicle
locations $p_i$, for every $i\in
\{1,\dots,m\}$, that minimizes the expected cost given by
\begin{equation}\label{eq:multtime}
\subscr{C}{exp}(p_1,\dots,p_m) :=
\int_A^B\min_{i\in\{1,\dots,m\}}C(p_i,x)\phi(x)dx,
\end{equation}
where $C(p_i,x)$ satisfies the same properties that are assumed in Section~\ref{secn:probsingle}. The single vehicle case shall then follow as a special case of multiple vehicles.

\section{The Case of Multiple Vehicles} \label{secn:multi}

Consider the multiple vehicle case from Section~\ref{secn:probmulti} with 
assumptions (i) and (ii) from Section~\ref{secn:probsingle}.
We will require the concept of \emph{dominance} regions. For the $i$-th vehicle, the dominance region $\VV_i$ is defined as
\[
\VV_i := \{x \in [A,B]\, :\, C(p_i,x) \leq C(p_j,x), \forall j \neq i \}.
\]
In other words, $\VV_i$ is the set of all points $x$ for which an assignment of any point in that set to vehicle $i$ provides the least cost over assignment to any other vehicle. The following proposition provides a simple approach to computing the dominance regions.

\begin{proposition}
Under Assumption (ii), the dominance region of the $i$-th vehicle is the Euclidean Voronoi partition corresponding to the $i$-th vehicle, i.e.,
\[
\VV_i = \{x \in [A,B] \, : \, \abs{p_i - x} \leq \abs{p_j-x}, \forall j \neq i\}.
\]
\end{proposition}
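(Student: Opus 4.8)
The plan is to reduce the defining inequality of the dominance region to a statement purely about Euclidean distances by stripping away the structure imposed by Assumption (ii), and then to intersect over the other vehicle indices. First I would fix an arbitrary target location $x \in [A,B]$ and a pair of indices $i \neq j$. Since $C(p,x) = \tfrac{1}{2} f((p-x)^2)$ and the factor $\tfrac{1}{2}$ is positive, the inequality $C(p_i,x) \le C(p_j,x)$ is equivalent to $f\big((p_i-x)^2\big) \le f\big((p_j-x)^2\big)$.

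Next I would invoke the monotonicity of $f$. Both arguments $(p_i - x)^2$ and $(p_j-x)^2$ lie in $\real_{\geq 0}$ (in fact in $[0,(B-A)^2]$), and $f$ is monotonic — increasing — on this range, so $f\big((p_i-x)^2\big) \le f\big((p_j-x)^2\big)$ holds if and only if $(p_i-x)^2 \le (p_j-x)^2$. Because $t \mapsto \sqrt{t}$ is increasing on $\real_{\geq 0}$, the latter is equivalent to $\abs{p_i - x} \le \abs{p_j - x}$. Chaining these equivalences gives, for every fixed $x$ and every $j \neq i$, that $C(p_i,x) \le C(p_j,x)$ if and only if $\abs{p_i-x} \le \abs{p_j-x}$.

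Finally I would take the intersection over all $j \neq i$: by definition $\VV_i$ is precisely the set of $x \in [A,B]$ satisfying $C(p_i,x)\le C(p_j,x)$ for all $j\ne i$, which by the pointwise equivalence above coincides with $\{x\in[A,B] : \abs{p_i-x}\le\abs{p_j-x}\ \forall j\ne i\}$, i.e., the Euclidean Voronoi cell of $p_i$. Note that the non-strict inequalities handle equidistant ties consistently in both descriptions, so the set equality is exact.

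The main obstacle is the precise meaning of ``monotonic'': the equivalence in the second step uses that $f$ is \emph{strictly} increasing on $[0,(B-A)^2]$, so that $f(a)\le f(b)$ forces $a\le b$ and not merely the converse. If $f$ is only weakly monotonic, then, being a polynomial, it is still strictly monotonic outside finitely many points, so the two regions can differ only on a finite set; since the cost~\eqref{eq:multtime} is an integral of a bounded integrand against $\phi$, such a discrepancy is immaterial and the Voronoi description may be taken as the working definition of $\VV_i$. I would also remark that the relevant direction of monotonicity is that $f$ is increasing, which is the natural modeling assumption since $C$ is a cost that should grow with the vehicle--target distance; a decreasing $f$ would instead reverse all the inequalities and produce ``anti-Voronoi'' regions.
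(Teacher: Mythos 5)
Your proof is correct and follows essentially the same route as the paper's: unfold the definition of $\VV_i$, cancel the positive factor $\tfrac12$, and use the monotonicity of $f$ in Assumption (ii) to reduce $C(p_i,x)\le C(p_j,x)$ to $(p_i-x)^2\le(p_j-x)^2$ and hence to $\abs{p_i-x}\le\abs{p_j-x}$. You are in fact somewhat more careful than the paper, which only writes the forward implication and does not address that the set equality needs the equivalence (i.e.\ strict monotonicity of $f$), a point your discussion of weak versus strict monotonicity handles explicitly.
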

\begin{proof}
From the definition of $\VV_i$, we have
\begin{multline*}
 C(p_i,x) \leq C(p_j,x) \\ \Rightarrow (p_i - x)^2 \leq (p_j-x)^2 \Rightarrow \abs{p_i-x} \leq \abs{p_j-x},
\end{multline*}
from the monotonicity in Assumption (ii). 
\end{proof}
Without any loss of generality, let the vehicles be placed with their indices in ascending order on $[A,B]$. Then,
\begin{align*}
\VV_i = \begin{cases} [A, (p_1 + p_2)/2], &i = 1, \\
[(p_{i-1}+p_i)/2, (p_i + p_{i+1})/2], &i \in \{2, \dots, m-1\}, \\
[(p_{m-1}+p_m)/2, B], &i = m.
\end{cases}
\end{align*}

\subsection{Minimizing the Expected Constrained Travel Time} \label{secn:exptime}
For distinct vehicle locations, \eqref{eq:multtime} can be written
as
\begin{equation}\label{eq:texp}
\subscr{C}{exp}(p_1,\dots,p_m) = \sum_{i=1}^m\int_{\VV_i}C(p_i,x)\phi(x)dx,
\end{equation}
where $\VV_i$ is the dominance region of the $i$-th vehicle. The
gradient of $\subscr{C}{exp}$ is computed using the following formula,
which allows each vehicle to compute the gradient of
$\subscr{C}{exp}$ by integrating the gradient of $C$ over
$\VV_i$. 

\begin{proposition}[Gradient computation]\label{lem:grad_exptime}
  For all vehicle configurations such that no two vehicles are at
  coincident locations, the gradient of the expected time with respect to
  vehicle location $p_i$ is
  \[
  \frac{\partial \subscr{C}{exp}}{\partial p_i} = \int_{\VV_i}\frac{\partial C}{\partial p_i}(p_i,x)\phi(x)dx. 
  \]
\end{proposition}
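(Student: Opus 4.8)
The plan is to differentiate the expression \eqref{eq:texp} for $\subscr{C}{exp}$ term by term, keeping careful track of the fact that the endpoints of each dominance region $\VV_i$ depend on the vehicle locations. Since $\VV_i = [(p_{i-1}+p_i)/2, (p_i+p_{i+1})/2]$ for an interior vehicle (with the obvious modifications at $i=1$ and $i=m$), the integral $\int_{\VV_i} C(p_i,x)\phi(x)\,dx$ depends on $p_i$ both through the integrand and through its two limits of integration, while the integrals $\int_{\VV_{i-1}} C(p_{i-1},x)\phi(x)\,dx$ and $\int_{\VV_{i+1}} C(p_{i+1},x)\phi(x)\,dx$ depend on $p_i$ only through one endpoint each. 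All other terms in the sum are independent of $p_i$. So the first step is to isolate these three terms and apply the Leibniz integral rule to each.

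The key calculation is then to collect the boundary contributions. Differentiating $\int_{\VV_i} C(p_i,x)\phi(x)\,dx$ with respect to $p_i$ yields the interior term $\int_{\VV_i} \frac{\partial C}{\partial p_i}(p_i,x)\phi(x)\,dx$ plus two boundary terms: $\tfrac12 C\!\left(p_i, \tfrac{p_i+p_{i+1}}{2}\right)\phi\!\left(\tfrac{p_i+p_{i+1}}{2}\right)$ from the upper limit and $-\tfrac12 C\!\left(p_i, \tfrac{p_{i-1}+p_i}{2}\right)\phi\!\left(\tfrac{p_{i-1}+p_i}{2}\right)$ from the lower limit. Differentiating the $\VV_{i+1}$ term contributes $-\tfrac12 C\!\left(p_{i+1}, \tfrac{p_i+p_{i+1}}{2}\right)\phi\!\left(\tfrac{p_i+p_{i+1}}{2}\right)$ (moving lower limit), and the $\VV_{i-1}$ term contributes $+\tfrac12 C\!\left(p_{i-1}, \tfrac{p_{i-1}+p_i}{2}\right)\phi\!\left(\tfrac{p_{i-1}+p_i}{2}\right)$ (moving upper limit). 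The crucial observation is that the shared boundary point $x^\ast := \tfrac{p_i+p_{i+1}}{2}$ is equidistant from $p_i$ and $p_{i+1}$, so by Assumption (ii), $C(p_i,x^\ast) = \tfrac12 f((p_i-x^\ast)^2) = \tfrac12 f((p_{i+1}-x^\ast)^2) = C(p_{i+1},x^\ast)$; hence the two boundary terms at $x^\ast$ cancel exactly. The same cancellation occurs at the boundary $\tfrac{p_{i-1}+p_i}{2}$. After cancellation only the interior term survives, which is the claimed formula.

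The main obstacle is purely a matter of bookkeeping rather than of analysis: one must handle the endpoint cases $i=1$ and $i=m$ separately (where $\VV_1$ and $\VV_m$ each have one boundary fixed at $A$ or $B$, contributing no derivative, and only one moving internal boundary), and one must confirm that the Leibniz rule applies — which is immediate here since $C$ is polynomial in both arguments (hence $C^\infty$) and $\phi$ is bounded and integrable, so the integrand and its $p_i$-derivative are dominated on the compact domain $[A,B]$. The hypothesis that no two vehicles coincide guarantees each $\VV_i$ is a nondegenerate interval and that the midpoints are distinct, so the boundary evaluations are well defined and the term-by-term differentiation is legitimate.
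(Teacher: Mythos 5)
Your proposal is correct and follows essentially the same route as the paper: isolate the three terms of \eqref{eq:texp} that depend on $p_i$, apply the Leibniz rule, and observe that the boundary contributions at each shared midpoint cancel because Assumption (ii) forces $C(p_i,x^\ast)=C(p_{i+1},x^\ast)$ there, with the cases $i=1$ and $i=m$ handled separately. If anything, your bookkeeping is slightly more careful than the paper's, since you retain the $\phi$ factor in the boundary terms (as the Leibniz rule requires), whereas the paper's displayed computation omits it; the cancellation goes through either way because the factor is common to the paired terms.
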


\begin{proof}
Akin to similar results in~\cite{FB-JC-SM:09}, 
the following involves
writing the gradient of $\subscr{C}{exp}$ as a sum of two contributing
terms. The first is the final expression, while the second is a number
of terms which cancel out due to continuity of $C$ at the boundaries
of dominance regions.

 Let $p_j$ be termed as a \emph{neighbor} of $p_i$, i.e., $j\in$
  neigh$(i)$, if $\VV_i\cap\VV_j$ is non-empty. Then,
\begin{multline}\label{eq:gradientsum}
\frac{\partial \subscr{C}{exp}}{\partial p_i}  = \frac{\partial}{\partial p_i}\int_{\VV_i}C(p_i,x)\phi(x)dx \\ + \sum_{j\text{ neigh }(i)}\frac{\partial}{\partial p_i}\int_{\VV_j}C(p_j,x)\phi(x) dx,
\end{multline}
Now, from the expression of $\VV_i$, there arise three cases:

1. \emph{$i \in \{2,\dots, m-1\}$}: In this case, all
boundary points $(p_{i-1}+p_i)/2$ and $(p_i+p_{i+1})/2$ are differentiable with respect to
$p_i$. Therefore, by Leibnitz's Rule\footnote{
$\frac{\partial}{\partial z}\int_{a(z)}^{b(z)} f(z,x)dx = \int_{a(z)}^{b(z)} \frac{\partial f(z,x)}{\partial z}dx + f(z,b)\frac{\partial b(z)}{\partial z} - f(z,a)\frac{\partial a(z)}{\partial z}$},
\begin{align*}
&\frac{\partial}{\partial p_i}\int_{\VV_i}C(p_i,x)\phi(x)dx - \int_{\VV_i}\frac{\partial C}{\partial p_i}\phi(x)dx
\\ &= \frac{1}{2} \left(C\left(p_i, \frac{p_i+p_{i+1}}{2}\right) - C\left(p_i,\frac{p_i+p_{i-1}}{2}\right)\right) \\
&= \frac{1}{2} 
\left(f\left(\left(\frac{p_{i+1}-p_i}{2}\right)^2\right) - 
      f\left(\left(\frac{p_i-p_{i-1}}{2}\right)^2\right)\right),
\end{align*}
where the second step follows from Assumption (ii).

Using the same steps by applying the Leibnitz rule, we conclude that
\begin{align*}
\frac{\partial}{\partial p_i}\int_{\VV_{i-1}}C(p_{i-1},x)\phi(x)dx &=  \frac{1}{2} f\left(\left(\frac{p_{i}-p_{i-1}}{2}\right)^2\right) \\
\frac{\partial}{\partial p_i}\int_{\VV_{i+1}}C(p_{i+1},x)\phi(x)dx &= -\frac{1}{2} f\left(\left(\frac{p_{i+1}-p_{i}}{2}\right)^2\right). 
\end{align*}

Therefore, combining these three expressions into~\eqref{eq:gradientsum}, we conclude that for this case,
 \[
  \frac{\partial \subscr{C}{exp}}{\partial p_i} = \int_{\VV_i}\frac{\partial C}{\partial p_i}(p_i,x)\phi(x)dx. 
  \]

2. \emph{$i = 1$}: In this case, the lower limit of the integral below is $A$ and is a constant. Therefore, applying the Leibnitz rule, we obtain
{\small
\[
\frac{\partial}{\partial p_1}\int_{\VV_1}C(p_1,x)\phi(x)dx = \int_{\VV_1}\frac{\partial C}{\partial p_1}\phi(x)dx 
+  \frac{1}{2} f\left(\left(\frac{p_{2}-p_1}{2}\right)^2\right). 
\]}Applying the Leibnitz rule for the corresponding term involving the neighbor $p_2$, we obtain
\[
\frac{\partial}{\partial p_1}\int_{\VV_{2}}C(p_{2},x)\phi(x)dx =  -\frac{1}{2} f\left(\left(\frac{p_{2}-p_{1}}{2}\right)^2\right) 
\]

Therefore, combining these two expressions into~\eqref{eq:gradientsum}, we conclude that for this case,
 \[
  \frac{\partial \subscr{C}{exp}}{\partial p_1} = \int_{\VV_1}\frac{\partial C}{\partial p_1}(p_1,x)\phi(x)dx. 
  \]

The third case of $i=m$ is very similar to the case of $i=1$ and the conclusion analogous to that of $i=1$ can be verified. Therefore, the claim is verified for 
each case. 
\end{proof}

\begin{remark}[Infinite interval]
The formulation can easily be extended to the case when the domain for the vehicles is unbounded, i.e., $\real$. In that case, we will require an extra assumption on the weight function $\phi$ which would be that as $x\to \pm \infty$, $\phi(x) \to 0^+$ while $C(x)$ remains bounded.
\end{remark}

The expressions for the gradient can then be used within the Lloyd descent algorithm (see for example~\cite{FB-JC-SM:09}) to derive a control scheme for each vehicle to move, beginning with an initial arbitrary, non-degenerate configuration using the following steps iteratively: while a given number of iterations are not reached,
\begin{enumerate}
\item Each vehicle computes its Voronoi partition $\VV_i$,
\item Each vehicle computes the gradient of the cost function using Proposition~\ref{lem:grad_exptime},
\item Each vehicle computes its step size using backtracking line search, and
\item Each vehicle uses gradient descent to compute its new position.
\end{enumerate}

In the following subsection, we will address computing the global optima by posing the set of equations that need to be solved in order to compute the candidate points.

\subsection{Optimal Placement}
The optimal vehicle placement problem is cast as
\begin{align*}
&\min_{\{p_1,\dots, p_n\}\in [A,B]^m} \subscr{C}{exp}(p_1,\dots,p_m) \\
&\text{subject to } p_i \in [A,B], \forall i \in \{1,\dots, m\}.
\end{align*}

Without loss of generality, we assume that the vehicles are located such that $p_{i-1} < p_i < p_{i+1}$. Then, the candidate global minima are:
\begin{enumerate}
\item $p_1^* = A$ and the set of all points $p_i^*, i=2,\dots,m$, for which
\[
 \frac{\partial \subscr{C}{exp}}{\partial p_i} (p_1^*, \dots, p_m^*) = 0,
\]
\item $p_m^* = B$ and the set of all points 
$p_i^*, i = 1,\dots, m-1$, for which
\[
 \frac{\partial \subscr{C}{exp}}{\partial p_i} (p_1^*, \dots, p_m^*) = 0,
\]
or,
\item the set of all points $p_i^*, i = 1,\dots,m$, for which
\[
 \frac{\partial \subscr{C}{exp}}{\partial p_i} (p_1^*, \dots, p_m^*) = 0,
\]
\end{enumerate}
along with the additional condition on the Hessian
\[
 \frac{\partial^2 \subscr{C}{exp}}{\partial \p^2} (p_1^*, \dots, p_m^*) \succ 0,
\]
where $\p := [p_1, \dots, p_m]$. Proposition~\ref{lem:grad_exptime} provides a simple expression for the computation of the partial derivatives of $\subscr{C}{exp}$. The set of all candidate can be characterized by
\begin{align*}
\int_A^{\frac{p_1+p_2}{2}} f'\left((p_1-x)^2\right)(p_1-x)\phi(x)dx &= 0, \\
\int_{\frac{p_{m-1}+p_m}{2}}^B f'\left((p_m-x)^2\right)(p_m-x)\phi(x)dx &= 0,
\end{align*} 
and, for $2\leq i \leq m-1$, 
\begin{align*}
\int_{\frac{p_{i-1}+p_i}{2}}^{\frac{p_{i}+p_{i+1}}{2}}f'\left((p_i-x)^2\right)(p_i-x)\phi(x)dx &= 0
\end{align*}

Now let us call the integral
\[
F(p,b, a) := \int_a^b f'\left((p-x)^2\right)(p-x)\phi(x)dx
\]
Notice that under Assumption (ii), $F(p, b, a)$ is also a polynomial in $p$. Then, the candidates
for global minima are given by the set of polynomial equations:
\begin{align}\label{eq:optima}
F\left(p_1,\frac{p_1+p_2}{2},A\right) &= 0 \nonumber \\
F\left(p_i,\frac{p_{i+1}+p_i}{2}, \frac{p_{i}+p_{i-1}}{2}\right) &= 0 \quad \forall i \in \{2,\dots, m-1\}, \nonumber \\
F\left(p_m, B, \frac{p_{m-1} + p_m}{2}\right) &= 0,
\end{align} 
with the additional possibility that 
$p_1 = A$ or \mbox{$p_m = B$}. In the next section, we will review techniques from numerical algebraic geometry to explore the 
full spectrum of solutions for \eqref{eq:optima} and therefore compute the global optimum.

\section{Polynomial system of equations through Algebraic Geometry}\label{sec:homotopy}

Typically, performing an exhaustive search of solutions of systems of nonlinear equations such as \eqref{eq:optima}
is a prohobitively difficult task. 
However, in this paper, by restricting ourselves to 
polynomial conditions, this becomes feasible.
Furthermore, though in the original formulation
only requires computing real solutions of the system, we expand our search space to complex space, 
i.e., instead of $\textbf{p} \in \mathbb{R}^{n}$
we take $\textbf{p} \in \mathbb{C}^n$. 
The purpose of the complexification of the variables is to enable us to use some of the powerful 
mathematical and computational tools from algebraic geometry, i.e., in mathematical terms, 
$\mathbb{C}^n$ is the algebraic closure of $\mathbb{R}^n$.  In particular, we utilize the numerical
algebraic geometric computational technique
called the numerical polynomial homotopy 
continuation~(NPHC) method which guarantees (in the probability 1 sense) to compute all complex isolated solutions of a 
well-constrained system of multivariate polynomial equations.  More details are provided
in the books by \cite{SWbook} and 
\cite{BertiniBook}.

For a well-constrained system of polynomial equations 
(also called a square system which has the same number of 
equations and variables) $\textbf{F}(\textbf{p}) = \textbf{0}$, classical NPHC method 
uses a single homotopy that 
starts with an upper bound on the number
of isolated complex solutions.  
One standard upper bound is the classical
B\'ezout bound (CBB) which is simply
the product of the degrees of the polynomials, 
namely $\prod_{i=1}^{n} d_i$ where $d_i = \deg {\bf F}_i$
and $n$ is the number of polynomials in ${\bf F}$.
Although the CBB is trivial to compute, 
it does not take structure (such as sparsity or sparsity) of the system into account.
There are tighter bounds such as the 
multihomogeneous B\'ezout bound
and the polyhedral, also called the
Bernshtein-Kushnirenko-Khovanskii (BKK), bound
can exploit some structure in the system
to provide a tighter upper bound
using possibly
significant additional computations.

Each such upper bound yields a corresponding system 
$\textbf{G}(\textbf{p}) = \textbf{0}$,
called a start system,
where the bound is sharp.  For example,
the CBB yields 
$$\textbf{G}(\textbf{p}) = [p_1^{d_1}-1, \dots, p_{n}^{d_n} -1] = {\bf 0}$$
which clearly has $\prod_{i=1}^n d_i$ isolated solutions.
For other upper bounds, the procedure of constructing a start system may be more involved. 
Once a start system is constructed, 
a homotopy between $\textbf{F}(\textbf{p})$ and $\textbf{G}(\textbf{p})$ is constructed as
\begin{equation*}
 \textbf{H}(\textbf{p},t) = (1-t) \textbf{F}(\textbf{p}) + e^{\theta\sqrt{-1}} \, t \, \textbf{G}(\textbf{p}) = \textbf{0}
\end{equation*}
where $\theta\in[0,2\pi)$.  
One tracks the solution path defined by
${\bf H} = {\bf 0}$ from 
a known solution of ${\bf G} = {\bf 0}$
at $t = 1$ to $t = 0$.  
For all but finitely many $\theta\in[0,2\pi)$,
all solution paths are smooth for $t\in(0,1]$
and the set of isolated solutions
of ${\bf F} = {\bf 0}$ is contained
in the set of limit points of
the paths that converge at $t\rightarrow0^+$.
Since each path can be tracked independent of each other, the NPHC method is embarrassingly parallelizeable.

To exploit some of the structure in the 
system \eqref{eq:optima}, we first 
factor each polynomial ${\bf F}_i$,
say ${\bf F}_i = q_{i1}^{r_{i1}}\cdots q_{i k_i}^{r_{i k_i}}$ where $q_{ij}$ are polynomials
and $r_{ij}$ are positive integers.
Thus, we can replace each ${\bf F}_i$ 
with a square-free factorization
$q_{i1}\cdots q_{i k_i}$
to remove trivial singularities caused
by $r_{ij} > 1$ thereby improving
the numerical conditioning of the homotopy
paths.  Thus, the solutions of
${\bf F} = {\bf 0}$ is equal to the union
of the solutions of
$${\bf Q}_{j_1,\dots,j_n} = [q_{1j_1},\dots,q_{nj_n}] = {\bf 0}$$
where $1 \leq j_i \leq k_i$ for $i = 1,\dots,n$.
There are several benefits from such an approach.
First, we again improve numerical conditioning
of the homotopy paths by solving
lower degree systems and with the
removal of trivial singularities
that arise from solutions that simultaneously 
solve two or more such systems.
Second, this produces additional parallelization
opportunity, e.g., by solving 
each subsystem independently.
Rather than having a completely independent 
solving, we could utilize ideas
of regeneration developed by 
\cite{Regeneration} based on
bootstrapping from solving subsystems.  
To highlight the potential, suppose
that one has found that the subsystem
${\bf Q}_{j_1,\dots,j_s} = [q_{1j_1},\dots,q_{s j_s}] = {\bf 0}$ 
has no solutions, then one immediately knows
that ${\bf Q}_{j_1,\dots,j_s,j_{s+1},\dots,j_n} = {\bf 0}$ has no solutions for every
$1\leq j_i \leq k_i$ for $i = s+1,\dots,n$.  

For the regeneration approach, 
fix indices $j_1,\dots,j_n$ such that $1\leq j_i \leq k_i$ and we aim to solve ${\bf Q}_{j_1,\dots,j_n} = {\bf 0}$.  
Let $\ell_{iu}$ be general linear polynomials
for $i = 1,\dots,n$ and $u = 1,\dots,D_{i}$
where $D_{i} = \deg q_{ij_i}$.  
Consider the polynomial systems 
$${\bf G}^s_{u_{s+1},\dots,u_n} = 
[q_{1j_1},\dots,q_{s j_s},\ell_{s+1,u_{s+1}},\dots,\ell_{nu_n}].$$
Using linear algebra, we solve 
${\bf G}^0_{1,\dots,1} = {\bf 0}$.  
Regeneration using a two-stage approach
to use the solutions, say~$S_1$, of 
${\bf G}^s_{1,\dots,1} ={\bf 0}$
to compute the solutions of 
${\bf G}^{s+1}_{1,\dots,1} ={\bf 0}$ as follows.  
The first stage, 
for each $u = 2,\dots,D_{s+1}$, uses
the~homotopy
$$
(1-t){\bf G}^s_{u,1,\dots,1} 
+ t{\bf G}^s_{1,\dots,1} ={\bf 0} 
$$
with start points $S_1$ at $t = 1$
to compute the solutions $S_u$ of
${\bf G}^s_{u,1,\dots,1} = {\bf 0}$
at $t=0$.  By genericity, 
every path in this homotopy is smooth
for $t\in[0,1]$ so that $\# S_1 = \# S_u$.
Let $S = \cup_{i=1}^{D_{s+1}} S_i$ 
which are the solutions of
$$
{\bf K}^s_{1,\dots,1}
= \left[q_{1j_1},\dots,q_{sj_s},
\prod_{\alpha=1}^{D_{s+1}} \ell_{s+1,\alpha},
\ell_{s+2,1},\dots,\ell_{n1}\right] = {\bf 0}.
$$
Then, the second stage uses the homotopy
$$
(1-t) {\bf G}^{s+1}_{1,\dots,1} + 
t {\bf K}^s_{1,\dots,1} = {\bf 0}
$$
with start points $S$ at $t = 1$
to compute the solutions of
${\bf G}^{s+1}_{1,\dots,1} = {\bf 0}$
as desired.  Iterating this process 
produces the solutions 
to ${\bf Q}_{j_1,\dots,j_n} = {\bf 0}$.

A regeneration-based approach
can be advantageous over using a
single homotopy when the subsystems
have far fewer solutions than the selected
upper bound would predict.  
When the upper bound is not sharp,
this causes paths to diverge to infinity
resulting in wasted extra computation.  
This is reduced in regeneration 
by performing a sequence
of homotopies.  Likewise,
if the upper bound is sharp, then 
regeneration is not advantageous
due to the extra tracking through this sequence.
Nonetheless, we can produce all isolated
complex solutions using 
either the classic single homotopy
or regeneration which allows us
to always compute the global optimum.

\section{Simulations} \label{sec:simulations}
In this section, we numerically evaluate the proposed method and compare it with the classic Lloyd descent algorithm for two choices of the 
density functions $\phi$. The first scenario is selected in such a way that, for almost all initial configurations, the Lloyd algorithm tends to the global optimum. The second scenario is one in which, from some special initial conditions, the Lloyd algorithm tends to only a particular local optimum.

For the first scenario, we take 
the interval $[A,B]$ to be~$[0,W]$ for $W > 0$, 
$C(p,x) = f\left((p-x)^2\right) = (p-x)^2$,
and the weight function $\phi(x) := x(W-x)$. 
Then, we have
\begin{multline*}
6F(p, b, a) = 12\int_a^b (p-x) x(W-x) dx = \\ 
\Big( b^2 (6 p W - 4 p b - 4 W b + 3 b^2) - a^2 (6 p W - 4 p a - 4 W a + 3 a^2) \Big).
\end{multline*}

Substituting into the system of equations~\eqref{eq:optima}, we obtain 

{\tiny
\begin{align}
\nonumber&\Big( \frac{p_1 + p_2}{2} \Big )^2 \Big (6 p_1 W - 4 p_1 \Big( \frac{p_1 + p_2}{2} \Big ) - 4 W \Big( \frac{p_1 + p_2}{2} \Big ) + 3 \Big( \frac{p_1 + p_2}{2} \Big )^2 \Big )  = 0, \\
& \Big( \frac{p_{3} + p_{2}}{2} \Big )^2 \Big (6 p_2 W - 4 p_2  \Big( \frac{p_3 + p_2}{2} \Big ) - 4W \Big( \frac{p_{3} + p_2}{2} \Big ) + 3 \Big( \frac{p_{3} + p_2}{2} \Big )^2 \Big )  \nonumber \\ 
\nonumber&=  \Big( \frac{p_2 + p_{1}}{2} \Big )^2 \Big (6 p_2 W - 4 p_2 \Big( \frac{p_2 + p_{1}}{2} \Big ) - 4W \Big( \frac{p_2 + p_{1}}{2} \Big ) + 3 \Big( \frac{p_2 + p_{1}}{2} \Big )^2 \Big), \\
& W^2 \Big (6 p_3 W - 4 p_3 W - 4 W^2 + 3 W^2 \Big )  \nonumber \\ 
&=  \Big( \frac{p_2 + p_3}{2} \Big )^2 \Big (6 p_3 W - 4 p_3  \Big( \frac{p_2 + p_3}{2} \Big ) - 4 W \Big( \frac{p_2 + p_3}{2} \Big ) + 3 \Big( \frac{p_2 + p_3}{2} \Big )^2 \Big).\label{eq:System1}
\end{align}
}The system \eqref{eq:System1} consists of
three polynomial equations in three unknowns $p_1, p_2, p_3$ with the added possibilities of $p_1 = 0$
or $p_3 = W$.  
Taking $W = 1$, solving all three possibilities
following Section~\ref{sec:homotopy}
yields a total of 44 solutions in $\mathbb{C}^3$,
of which~$32$ are in~$\mathbb{R}^3$.  
Testing the objective function
yields that the global optimal solution 
is approximately~$(0.235, 0.5, 0.765)$. 
For randomly generated initial vehicle locations, we applied the well-known Lloyd descent algorithm and observe that the vehicle locations converge to the same configuration as illustrated in Figure~\ref{fig:symmetric}.

\begin{figure}[!h]
\centering
\includegraphics[width=\columnwidth]{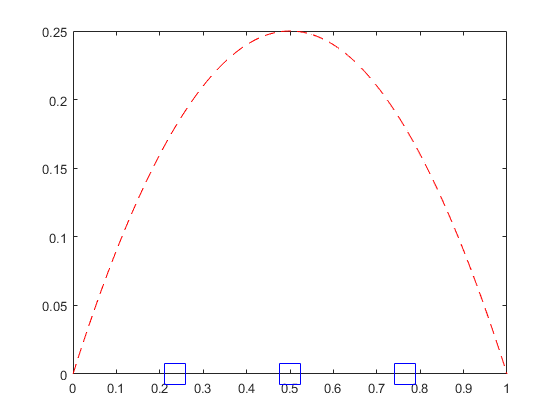}
\caption{Final configuration of vehicle locations obtained by running the Lloyd descent algorithm. The red dashed line shows the density function $\phi(x) = x(1-x)$, while the blue squares denote the three vehicles.}
\label{fig:symmetric}
\end{figure}

In the second scenario, 
we consider the weight function $\phi(x) := x^2-x^4$ in the interval $[A, B] = [-W, W]$. 
Then, the polynomial $F(p, b, a)$ is given by
\begin{multline*}
F(p, b, a) = \int_a^b (p-x) (x^2-x^4) dx \\ = \frac{p}{3}(b^3 - a^3) - \frac{1}{4}(b^4 - a^4) - \frac{p}{5}(b^5 - a^5) + \frac{1}{6}(b^6 - a^6). 
\end{multline*}
Substituting into the system of equations~\eqref{eq:optima}, for the interval $[-W, W]$, we obtain {\tiny
\begin{align}
\nonumber&\frac{p_1}{3}\Big(\Big (\frac{p_1 + p_2}{2} \Big)^3 + W^3\Big) - \frac{1}{4}\Big(\Big (\frac{p_1 + p_2}{2} \Big)^4 - W^4 \Big) \nonumber \\  
\nonumber &- \frac{p_1}{5}\Big( \Big (\frac{p_1 + p_2}{2} \Big)^5 + W^5 \Big)  + \frac{1}{6}\Big(\Big (\frac{p_1 + p_2}{2} \Big)^6 - W^6 \Big) = 0, \\
&\frac{p_2}{3}\Big(\Big (\frac{p_2 + p_3}{2} \Big)^3 - \Big (\frac{p_1 + p_2}{2} \Big)^3\Big) - \frac{1}{4}\Big(\Big (\frac{p_2 + p_3}{2} \Big)^4 - \Big (\frac{p_1 + p_2}{2} \Big)^4\Big) \nonumber \\
\nonumber&- \frac{p_2}{5}\Big(\Big (\frac{p_2 + p_3}{2} \Big)^5 - \Big (\frac{p_1 + p_2}{2} \Big)^5\Big) + \frac{1}{6}\Big(\Big (\frac{p_2 + p_3}{2} \Big)^6 - \Big (\frac{p_1 + p_2}{2} \Big)^6\Big) = 0,\\
&\frac{p_3}{3}\Big(W^3 - \Big (\frac{p_2 + p_3}{2} \Big)^3\Big) - \frac{1}{4}\Big(W^4 - \Big (\frac{p_2 + p_3}{2} \Big)^4\Big)\nonumber \\ &- \frac{p_3}{5}\Big(W^5 - \Big (\frac{p_2 + p_3}{2} \Big)^5 \Big) + \frac{1}{6}\Big(W^6 - \Big (\frac{p_2 + p_3}{2} \Big)^6\Big) = 0.\label{eq:System2}
\end{align}
}

The system \eqref{eq:System2} also consists
of three polynomial equations in three 
unknowns $p_1, p_2, p_3$. Using the technique from Section~\ref{sec:homotopy}, we obtain 
a total of 122 solutions in $\mathbb{C}^3$,
of which~30 are in $\mathbb{R}^3$.
This computation yields that the 
global optimal solution
is approximately $(-0.626,0.431,0.762)$. 
However, for this problem, from a class of initial configurations of the type $(-a, 0, a)$, where $a \in (0,W)$, the Lloyd descent algorithm leads the vehicles toward a local minimum approximately at
$(-0.66,0,0.66)$ as illustrated in Figure~\ref{fig:higherorder}.


\begin{figure}[!h]
\centering
\includegraphics[width=\columnwidth]{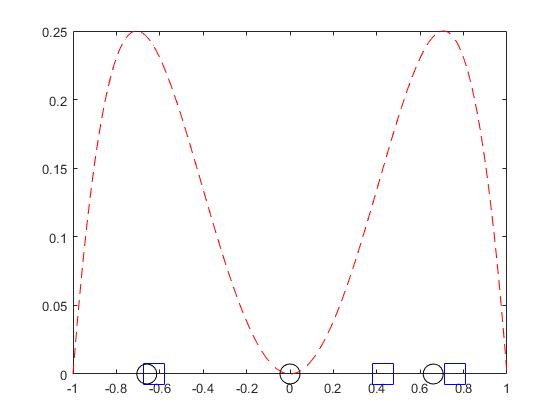}
\caption{Final configuration of vehicle locations obtained by running the Lloyd descent algorithm. The red dashed line shows the density function $\phi(x) = -x^4+x^2$. The black circles denote the final output of the Lloyd algorithm when initialize from (-a, 0, a), while the blue squares denote the final output when the vehicles are initialized with a random initial configuration.}
\label{fig:higherorder}
\end{figure}

\section{Conclusions and Future Directions}

This paper considered the static coverage control problem for placement of vehicles with simple motion on the real line. We assumed that the cost is a polynomial function of the locations of the vehicles. Our main contribution was to demonstrate the use of a numerical polynomial homotopy continuation method that guarantees to find all solutions of polynomial equations, in order to characterize the \emph{global minima} for the coverage control problem. The results were compared numerically using two examples with a classic distributed approach involving the use of Lloyd descent, known to converge only to a local minimum under certain technical conditions. We observed that in one of the examples, both methods lead to the same global minimizer, while in the second example, the Lloyd descent converges to only a local minimum when initialized from a particular class of configurations.

Future work is expected to center around fully distributed implementations of the polynomial homotopy method based on exploiting the structure of polynomials. We also plan to explore the complexity of this 
polynomial homotopy method 
in higher dimensional spaces. 

\bibliography{references} 


\end{document}